\newcommand{\bars}[1]{\lvert #1 \rvert}
\newcommand{\abs}[1]{\lvert #1 \rvert}
\newcommand{\pn}[1]{\left ( #1 \right )}
\newcommand{\floor}[1]{\left \lfloor  #1 \right \rfloor }
\DeclareMathOperator{\out}{OUT}
\DeclareMathOperator{\inn}{IN}
\newcommand{\outd}[1]{\bars{\out(#1)}}
\newcommand{\ind}[1]{\bars{\inn(#1)}}
\newtheorem{thm}{Theorem}
\begin{document}
\title{Personalized PageRank to a Target Node}

\numberofauthors{2}
\author{
\alignauthor
Peter Lofgren\\
       \affaddr{Stanford University}\\
       \email{plofgren@stanford.edu}
\alignauthor
Ashish Goel\\
       \affaddr{Stanford University}\\
       \email{ashishg@stanford.edu}
}

\date{\today}

\maketitle

\begin{abstract}

Personalalized PageRank uses random walks to determine the importance or authority of nodes in a graph from the point of view of a given source node.  Much past work has considered how to compute personalized PageRank from a given source node to other nodes.  In this work we consider the problem of computing personalized PageRanks to a given target node from all source nodes.  This problem can be interpreted as finding who supports the target or who is interested in the target.

We present an efficient algorithm for computing personalized PageRank to a given target up to any given accuracy.  
We give a simple analysis of our algorithm's running time in both the average case and the parameterized worst-case.  We show that for any graph with $n$ nodes and $m$ edges, if the target node is randomly chosen and the teleport probability $\alpha$ is given, the algorithm will compute a result with $\epsilon$ error in time  $O\pn{\frac{1}{\alpha \epsilon} \pn{\frac{m}{n} + \log(n)}}$.  
This is much faster than the previously proposed method of computing personalized PageRank separately from every source node, and it is comparable to the cost of computing personalized PageRank from a single source. We present results from experiments on the Twitter graph which show that the constant factors in our running time analysis are small and our algorithm is efficient in practice.

\end{abstract}

\section{Note on Related Work}
After we posted this work, we became aware of the related work \cite{andersen2007local, Andersen:2008:RPL:1451983.1452000}.  It includes an algorithm similar to the one we (independently) discovered. However, our work makes the following novel contributions.  We analyze the algorithm under a more detailed parameterization which includes the in-degree of nodes.  We use a priority queue to obtain a dependence on $\epsilon$ of $O \pn{m \log \pn{\frac{1}{\epsilon}}}$, showing that the running time tends toward the running time of power iteration as $\epsilon$ tends to 0.  Finally, we present detailed experiments to determine the running time of this algorithm on the Twitter graph.

\section{Introduction}
Personalized PageRank is a random-walk based method of modeling how nodes are related in a graph like a social network, the web graph, or a citation graph.  It has been used in a variety of application including personalized search \cite{jeh2003scaling}, link prediction \cite{liben2007link,bahmani2010fast}, link-spam detection \cite{benczur2005spamrank}, and graph partitioning \cite{andersen2006local}.  Previous work has considered how to compute personalized PageRank from a single source node.  In this work, we consider the problem of computing personalized PageRank to a single target node from all source nodes.  More precisely, given a node $v$ in a directed (or undirected) graph $G=(V,E)$, we would like to approximate the personalized PageRanks $\pi(u,v)$ from all nodes $u \in V$ to the target node $v$.  We define the personalized PageRank $\pi(u,v)$ from a node $u$ to a node $v$ to be the fraction of time we spend at $v$ on a random walk from $u$, where after each step we stop with a given probability $\alpha$.  Note that this is different from reversing the edges and computing personalized PageRank from a single source $v$.  If edges represent interest, this problem can be interpreted as finding the nodes $u$ which are interested in $v$, or if edges represent support this problem can be interpreted as finding nodes which support $v$.

This problem has several applications.  In a social network, for example, whenever $v$ produces content, we might want to find the nodes $u$ with $\pi(u,v)$ above some threshold and add the content to each such $u$'s feed.  Or an advertiser $v$ on a social network might want to give special offers to the nodes $u$ which are most interested in it.  The use of personalized PageRank in recommendation and trust systems is discussed in \cite{andersen2008trust}.  On the web graph, this problem has been considered before.  In \cite{benczur2005spamrank}, the first phase of the authors' algorithm to detect when a web page $v$ is benefiting from link-spam is to compute the set of nodes $u$ with a high value of $\pi(u,v)$. 

The simplest solution to this problem is to compute personalized PageRanks from every source node $u$ using known methods like Monte Carlo \cite{fogaras2005towards, bahmani2010fast} or power iteration \cite{page1999pagerank}.  This is the solution proposed in \cite{benczur2005spamrank}, the only previous work on this problem (prior to \cite{andersen2007local, Andersen:2008:RPL:1451983.1452000}).  In \cite{benczur2005spamrank}, the cost of computing personalized PageRank from every source using Monte Carlo is amortized because there are a large number of target nodes $v$.    However, this simple solution requires $n$ Monte Carlo computations  even for a single target $v$.  As shown in \cite{fogaras2005towards} using the Chernoff bounds, computing an $\epsilon$ approximation to $\pi(u,v)$ with high probability from a single source $u$ to all $v$ takes $O \pn{ \frac{1}{\epsilon^2} \log\pn{n}}$ time.  Thus even for a single target this approach would take $O \pn{n \frac{1}{\epsilon^2} \log\pn{n}}$ time.  The challenge we address is finding an algorithm which can find all nodes $u$ that have high values of $\pi(u,v)$ without doing work linear in $n$.

We present an algorithm which, given $v$, approximates $\pi(u,v)$ for all $u$ to within a given additive error without needing to visit all the nodes.  Our method is to start at the target node $v$ and propagate updated estimates of $\pi(u,v)$ backwards along edges.  In power iteration, every node propagates its current value in every step.  The key idea of our algorithm is to maintain a priority queue and only propagate the value of the node whose value has changed the most since its value was last propagated.  It is very simple to implement: the entire algorithm is shown in Algorithm \ref{fig:algorithm}.  We prove that it is efficient both in an average-case and a parameterized sense.  We also present experiments on part of Twitter's graph to show that it is efficient in practice.  In this work we assume that a single processor is used for each target $v$ and that the graph is stored in local or distributed RAM.  If there are multiple targets, parallelism can be achieved by assigning different targets to different processors.  

The contributions of this work are the following: 
\begin{itemize}
\item In section \ref{sec:algorithm} we present a simple algorithm for computing personalized PageRanks to a target node up to any given additive error.  In section \ref{sec:error_analysis} we analyze the approximation error of the algorithm and prove it is correct.

\item In section \ref{sec:average_runtime}, we show that for an arbitrary graph $G$ and a target node $v$ chosen uniformly at random, our algorithm runs in time 
\[  O\pn{\frac{1}{\alpha \epsilon} \pn{\frac{m}{n} + \log(n)}} \] 
where $\frac{m}{n}$ is the average degree of a node, $\epsilon$ is the desired additive error, and $\alpha$ (typically between 0.1 and 0.2 in practice) is the probability of stopping after each step of the walk.  This is comparable to the cost of running Monte Carlo from a single source node, $O\pn{ \frac{1}{\epsilon^2} \log(n)}$ for high success probability, and it is much less than the cost of running Monte Carlo from every source node, $O\pn{ n \frac{1}{\epsilon^2} \log(n)}$. 
\item In section \ref{sec:parameterized_runtime} we show that for an arbitrary graph $G$ and arbitrary target node $v$, our algorithm runs in time
\[ O \pn{\frac{D_v(\alpha \epsilon)}{\alpha} \log \pn{\frac{1}{\epsilon \alpha}}} \]
 where $D_v(\alpha \epsilon) = \sum_{u:\pi(u,v) > \alpha \epsilon} \pn{\ind{u} + \log(n)}$ is a parameter which captures how difficult the problem is for $v$.  This shows the asymptotic dependence on $\epsilon$ is $O\pn{\log \pn{\frac{1}{\epsilon}}}$, and not $O\pn{\frac{1}{\epsilon^2}}$ as it is for Monte Carlo from a single a single source. Even if $\epsilon$ is small enough that we must consider the entire graph,  $D_v(\alpha \epsilon) \leq m + n \log(n) \leq 2m $ for graphs with $ n \log(n) \leq m$.  For such graphs as $\epsilon$ goes to zero the algorithm degrades gracefully to the asymptotic performance of power-iteration, $O \pn{m \log \pn{\frac{1}{\epsilon}}}$.  Thus for larger $\epsilon$ we get the benefit of only exploring a small set of nodes with high personalized PageRank to the target, while for small $\epsilon$ the running time is still comparable to the cost of running power iteration.

\item In section \ref{sec:experiments} we present results from an experiment on part of the Twitter graph with 5.3 million nodes and 380 million edges.  We find that our error analysis is tight and that $D_v(\alpha \epsilon)$ is an accurate paramterization of the running time.  As one example, we find that for a approximation of $\epsilon=10^{-5}$, the priority queue algorithm takes 1.2 seconds while power iteration takes 410 seconds to achieve additive error $\epsilon$ on the same machine.  This shows that the local nature of the algorithm can give significant savings.  
\end{itemize}

\section{Related Work}
(See note on related work in section 1.)

Personalized PageRank was first suggested in the original PageRank paper \cite{page1999pagerank}, and much follow up work has considered how to compute it efficiently.  Our approach of propagating estimate updates is similar to the approach taken by Jeh and Widom \cite{jeh2003scaling} and Berkin  \cite{berkhin2006bookmark} to compute personalized PageRank from a single source.  
Our equation \eqref{eq_power_iteration} appears as equation (10) in \cite{jeh2003scaling}.  
Both of these works suggest the heuristic of propagating from the node with the largest unpropagated estimate. Our work is different because we are interested in estimating the values $\pi(u,v)$ for a single target $v$, while earlier work was concerned with the values for a single source $u$.  Because of this, our analysis is completely different, and we are able to prove running time bounds.

To the best of our knowledge, the only previous work to consider the problem of computing personalized PageRank to a target node was by Benczur et al. \cite{benczur2005spamrank}, where it is used as one phase of an algorithm to identify link-spam.  They observe that a node $v$'s global PageRank is the average over all nodes $u$ of $\pi(u,v)$.  Thus to determine how a node $v$ achieves its global PageRank score, they propose we first find the nodes $u$ with a high value of $\pi(u,v)$.  Once that set has been found, it can be analyzed to determine if it looks like an organic set of nodes or an artificial link-farm.  To compute the values of $\pi(u,v)$ for each $v$, they propose taking random walks from every source node and do not consider other methods.

\section{Preliminaries}
We are given a directed or undirected graph $G=(V,E)$.  For now we assume $G$ is unweighted, but in section \ref{sec:weighted_graph} we show how our algorithm and theorems generalize easily to weighted graphs.  We define $\out(u) = \{w : (u,w) \in E\}$ and $\inn(u) = \{w : (w,u) \in E\}$. We are given a parameter $\alpha$ which determines the expected length of a random walk, $\frac{1}{\alpha}$.  For $u,v \in V$, we define personalized PageRank $\pi(u,v)$ to be the fraction of time we spend at $v$ on the following random walk: we start at $u$ and at each step with probability $\alpha$ we halt, while with probability $1-\alpha$ we transition to a random out-neighbor of the current node.  We refer to $\alpha$ as the teleport probability because another description of the Markov chain is the following: the process never halts, and at each step with probability $\alpha$ we teleport back to $u$ and continue from there, while with probability $1-\alpha$ we transition to a random out-neighbor of the current node.  

There may be dead end nodes with no out-neighbors in the graph, so for convenience we introduce an artificial sink node with a self-loop and introduce an artificial edge to the sink from each dead end node.  Alternatively, we could have artificially added a self-edge to each dead end node, or said that the walk should halt when it reaches a dead end node.  These alternatives result in a slightly different boundary case or normalization, but the exact choice doesn't matter significantly.  

In the original PageRank paper \cite{page1999pagerank}, the authors propose that when the random walk for computing PageRank teleports, the resulting node could be chosen from an arbitrary distribution.  We focus on the case when the distribution has a single point of support, because that is the case relevant to our applications.  The PageRank function is linear in the personalization distribution, as shown in \cite{jeh2003scaling}, so computing PageRank on single-point distributions is sufficient for computing it on arbitrary personalization distributions.

In the worst case, the target node $v$ might have an edge from every other node in the graph, so we must do $\Omega(n)$ work even for a rough approximation.  To parameterize the difficulty of the problem for a given $v$, we define
 \[ D_v(x) = \sum_{u:\pi(u,v) > x} \pn{\ind{u} + \log(n)}.\]
This parameter captures the idea that for each node $u$ which has a large personalized PageRank to $v$, we must consider all of $u$'s in-neighbors to see if any of them also have a large personalized PageRank to $v$.  The $\log(n)$ term captures the cost of popping from a priority queue.

In evaluating our approximation we consider additive pointwise error (the $L^{\infty}$ norm).   Given error threshold $\epsilon$, we seek an estimate $s(u)$ for each $u$ such that
\[ \max_{u \in V} \abs{s(u) - \pi(u,v) }  < \epsilon.\]
We choose this error measure because in applications we are often only interested in the nodes $u$ with a large value of $\pi(u,v)$, and we don't care if there are a large number of nodes with very small values of $\pi(u,v)$ which have been estimated to be 0.  For efficiency we want the resulting estimate vector $s$ to be sparse unless $\epsilon$ is very small, and this norm allows for a sparse estimate vector.

\section{A Recurrence for Personalized PageRank}
Our algorithm is based on a recurrence equation that relates the value of $\pi(u,v)$ to the values of $\pi(w,v)$ for $w \in \out(u)$.  To derive this recurrence, it is convenient to think about the number of times we visit $v$ on a random walk rather than the fraction of time we spend at $v$.  The number of times we visit $v$ is proportional to the fraction of time $\pi(u,v)$ because over a large number of walks, the number of times we visit $v$ will be the fraction of time we spend at $v$ multiplied by the average length of a walk, $\frac{1}{\alpha}$.  A random walk from $u$ begins by either teleporting immediately or by transitioning to a random neighbor, so the expected number of times we reach $v$ from $u$ is the probability of not teleporting immediately times the average expected number of times we reach $v$ from an out-neighbor of $u$. Thus personalized PageRank satisfies the recurrence
  \begin{align} \label{eq_power_iteration}
 \pi(u,v) = (1-\alpha) \frac{1}{\bars{\out(u)}} \sum_{w \in \out(u)} \pi(w,v)
 + \begin{cases} \alpha, & u=v \\ 0, & u \neq v \end{cases}.     
  \end{align}
We add $\alpha$ when $u=v$ because a walk from $v$ clearly visits $v$ on its first step regardless of what happens next, and this visit corresponds to an $\alpha$-fraction of an average walk.  This equation appears as equation (10) in \cite{jeh2003scaling}, where the authors give an alternate proof using linear algebra.

\section{The Priority Queue Algorithm}
\label{sec:algorithm}
Given a  target node $v$, our algorithm is based on the idea of propagating  updates outwards from $v$.  We maintain for each node $u$ a score $s(u)$ which estimates $\pi(u,v)$ from below and improves as the algorithm progresses.  Using the recurrence of equation \eqref{eq_power_iteration}, we see that when we update our estimate $\pi(w,v)$ for some node $w$, we need to update our estimate of $\pi(u,w)$ for each $u \in \inn(w)$.  Hence the basic update step of the algorithm is to choose a node $w$ and increase the score of each in-neighbor $u$ by  $(1-\alpha) \frac{1}{\outd{u}} s(w)$.  Since we might propagate a node $w$'s score more than once, it is important that we only propagate the part of $w$'s score which changed since the last time $w$'s score was propagated.  We let $p(w)$ denote the difference between $w$'s current score $s(w)$ and $w$'s score when its score was last propagated. We use a priority queue $Q$ ordered by priority $p(w)$ so we can easily find the node with the largest value of $p(w)$.  The complete algorithm is shown in Algorithm \ref{fig:algorithm}: as long as some node has priority above a minimum threshold, we pop off the node with the greatest priority and propagate its score to its in-neighbors. 
\begin{algorithm}
   \caption{
Computing personalized PageRank to a target.}
    \label{fig:algorithm}
\begin{algorithmic}
\Require digraph $G=(V,E)$,teleport probability $\alpha$,target vertex $v$, error tolerance $\epsilon$
\Ensure Approximation $s:V \to [0,1]$ to personalized PageRank such that for all $u$, $\abs{\pi(u, v) - s[u]} < \epsilon$ \\
\State  $s[v] = p[v] = \alpha$
\State  $q$ = Max Priority Queue on $V$ ordered by key $p$
\While {$\text{$q$.maxPriority()} > \alpha \cdot \epsilon$}
\State    $w$ = $q$.popMaxElement()    
\For {$u$ in $w$.inNeighbors()}
\State      $\Delta s =  (1-\alpha) \frac{p[w]}{ u.\text{outDegree}}$
\If {$u$ not in $s$}
\State $s[u] = p[u] = 0$
\EndIf
\State      $s[u] = s[u] + \Delta s$
\State      $q$.increasePriority($u$, $p[u] + \Delta s$)
\EndFor
\State      $p[w] = 0$
\EndWhile
\end{algorithmic}
\end{algorithm}    

An example run of 6 iterations of the algorithm is shown in Figure \ref{example_iterations}.
\begin{figure*}
  \centering
  \includegraphics[width=0.4\textwidth]{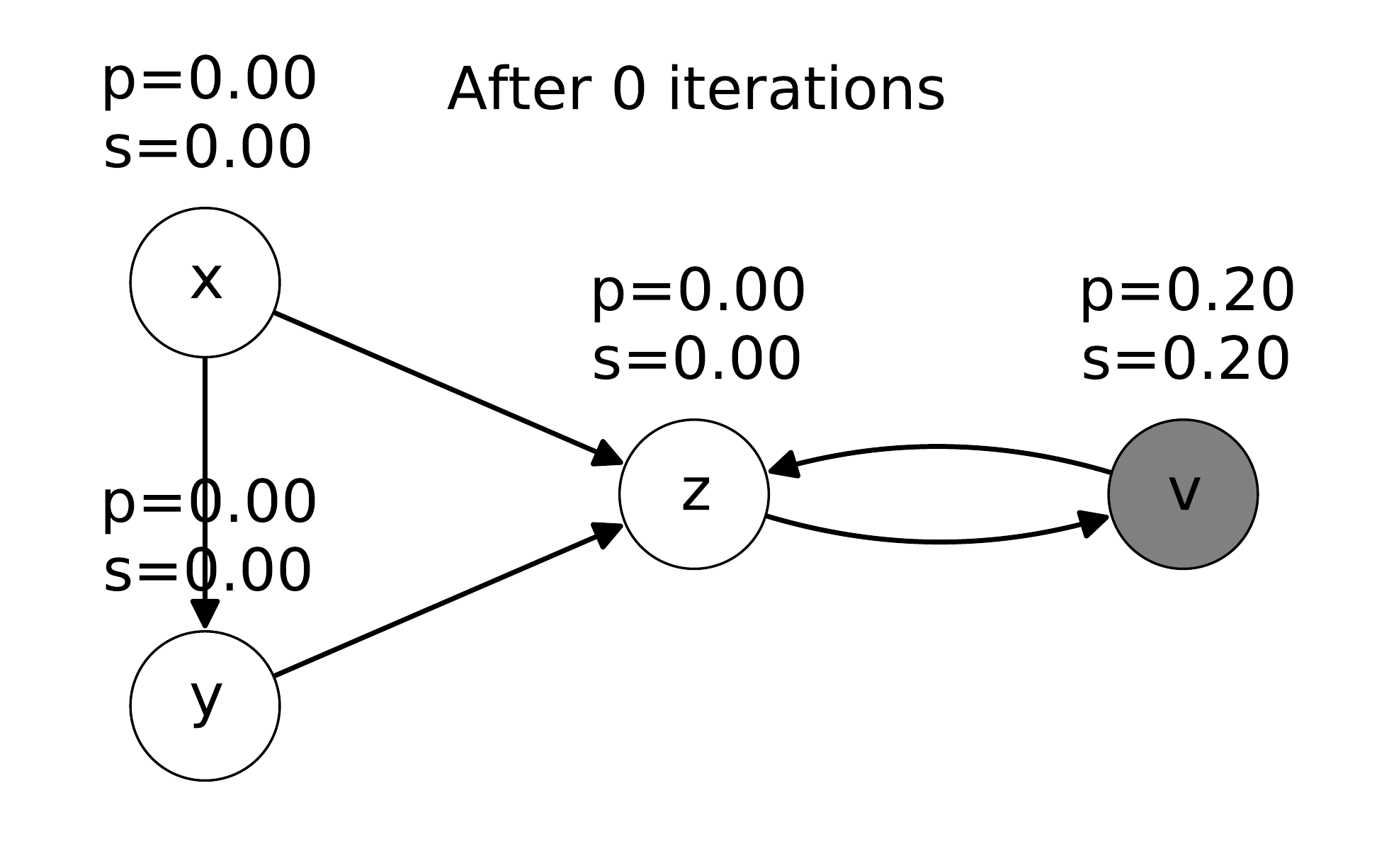}
  \includegraphics[width=0.4\textwidth]{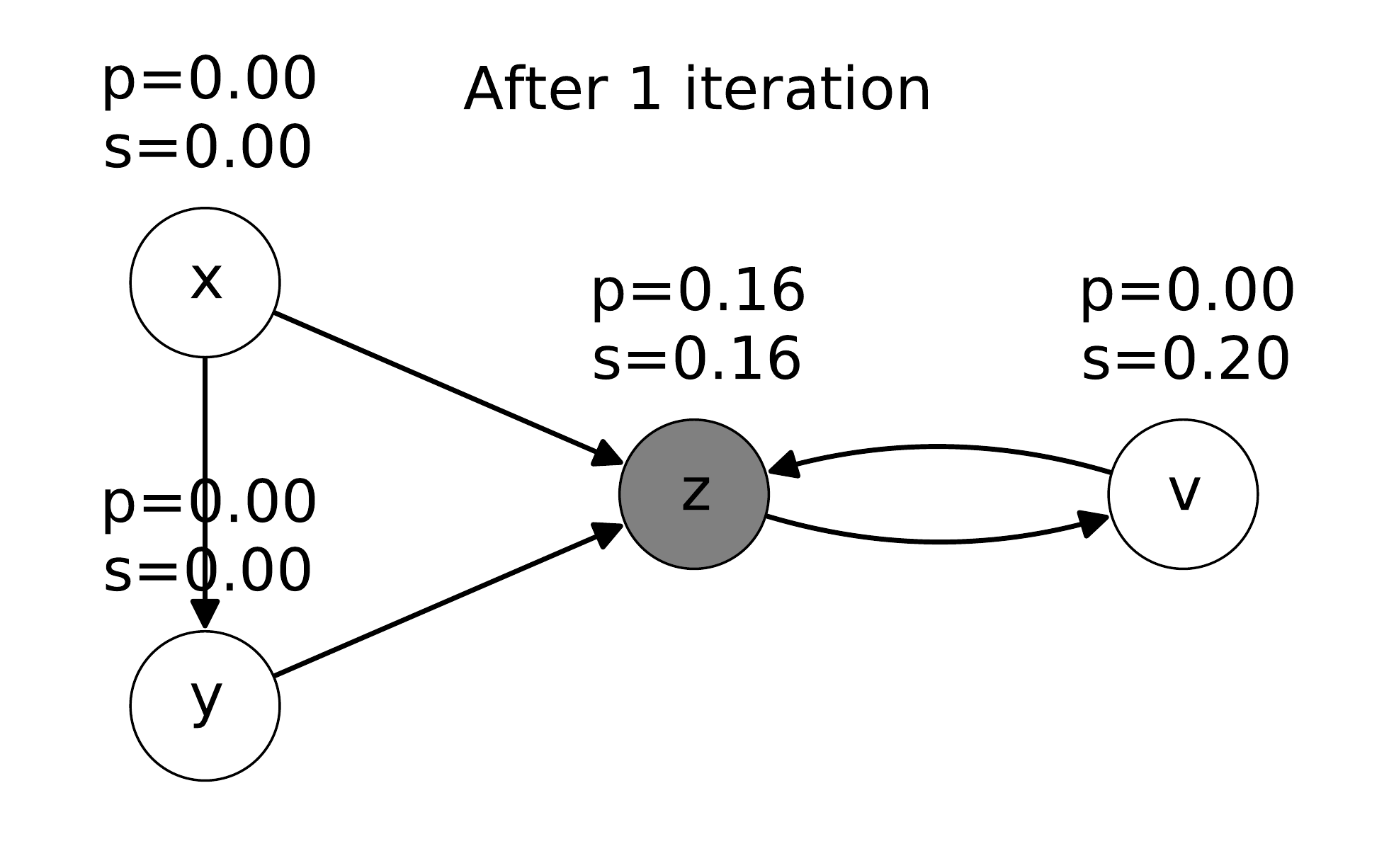}
  \includegraphics[width=0.4\textwidth]{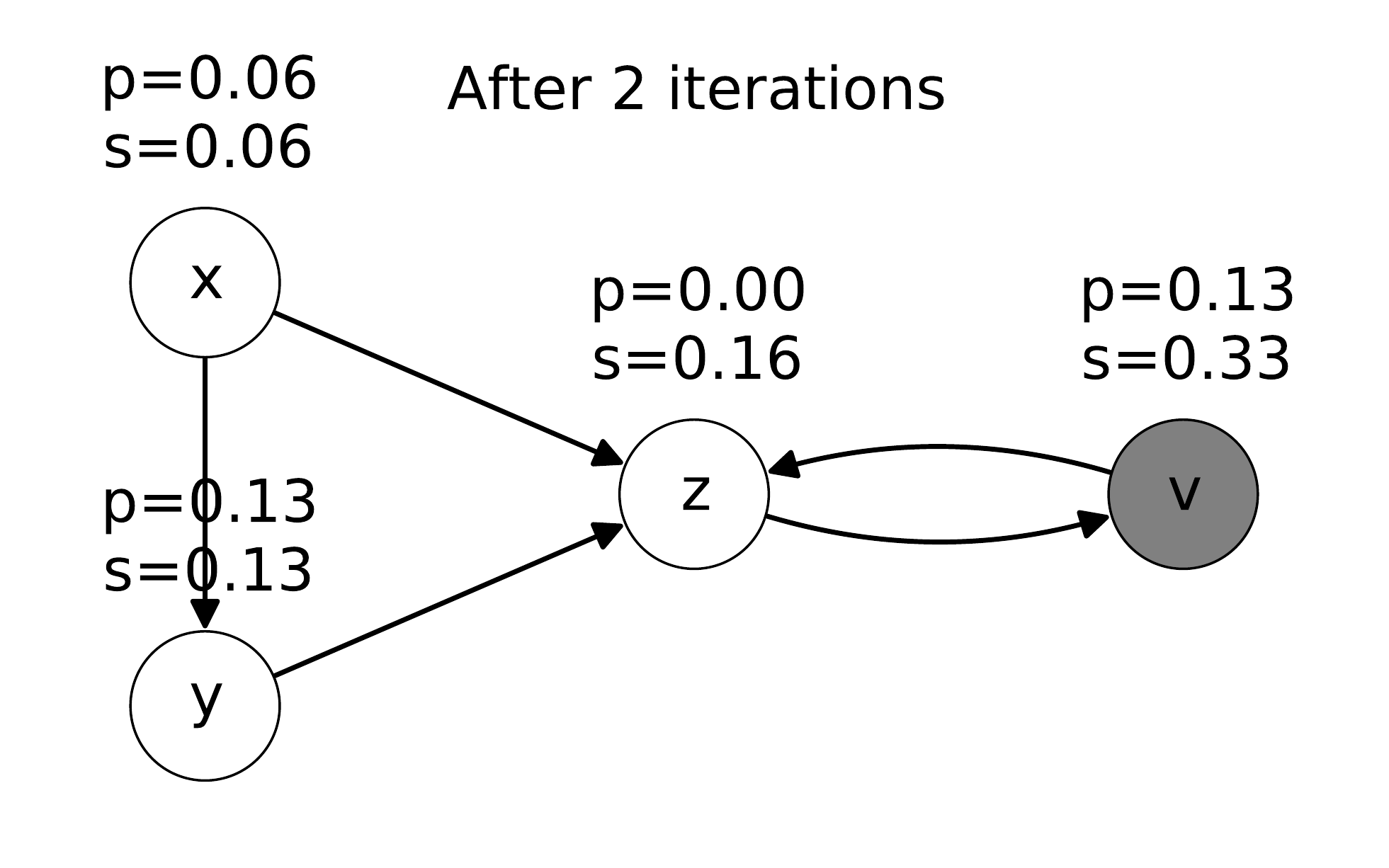}
  \includegraphics[width=0.4\textwidth]{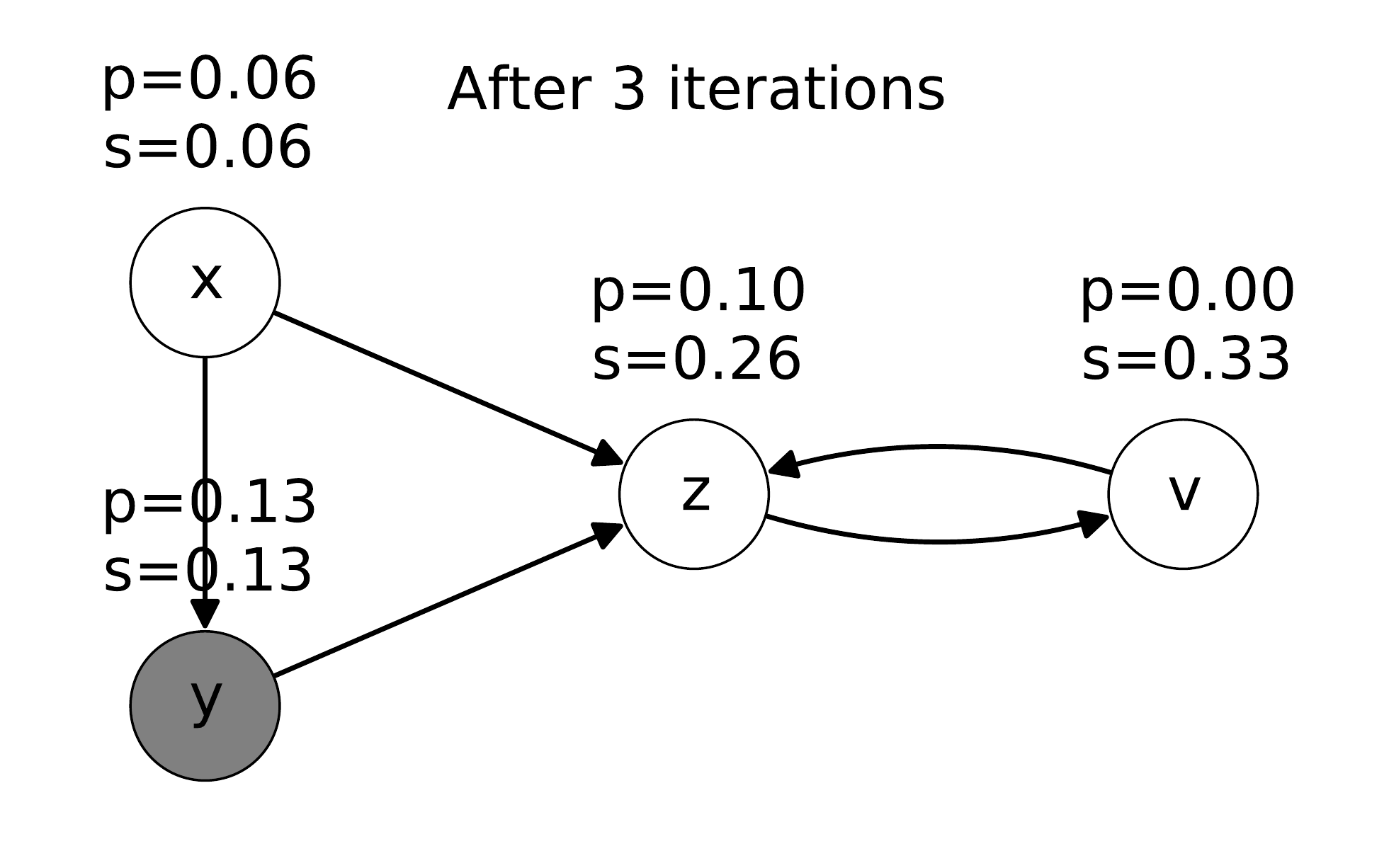}
  \includegraphics[width=0.4\textwidth]{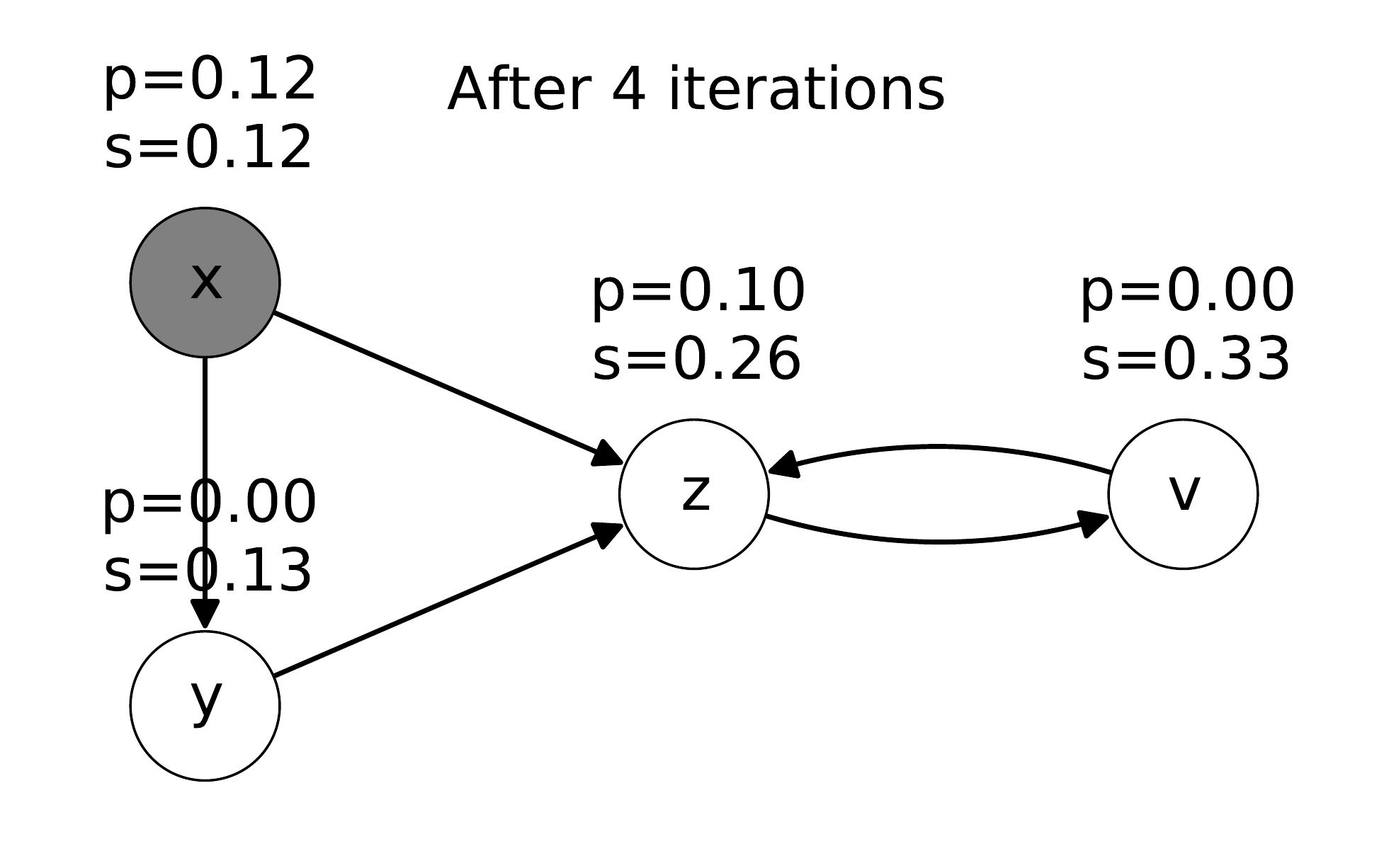}
  \includegraphics[width=0.4\textwidth]{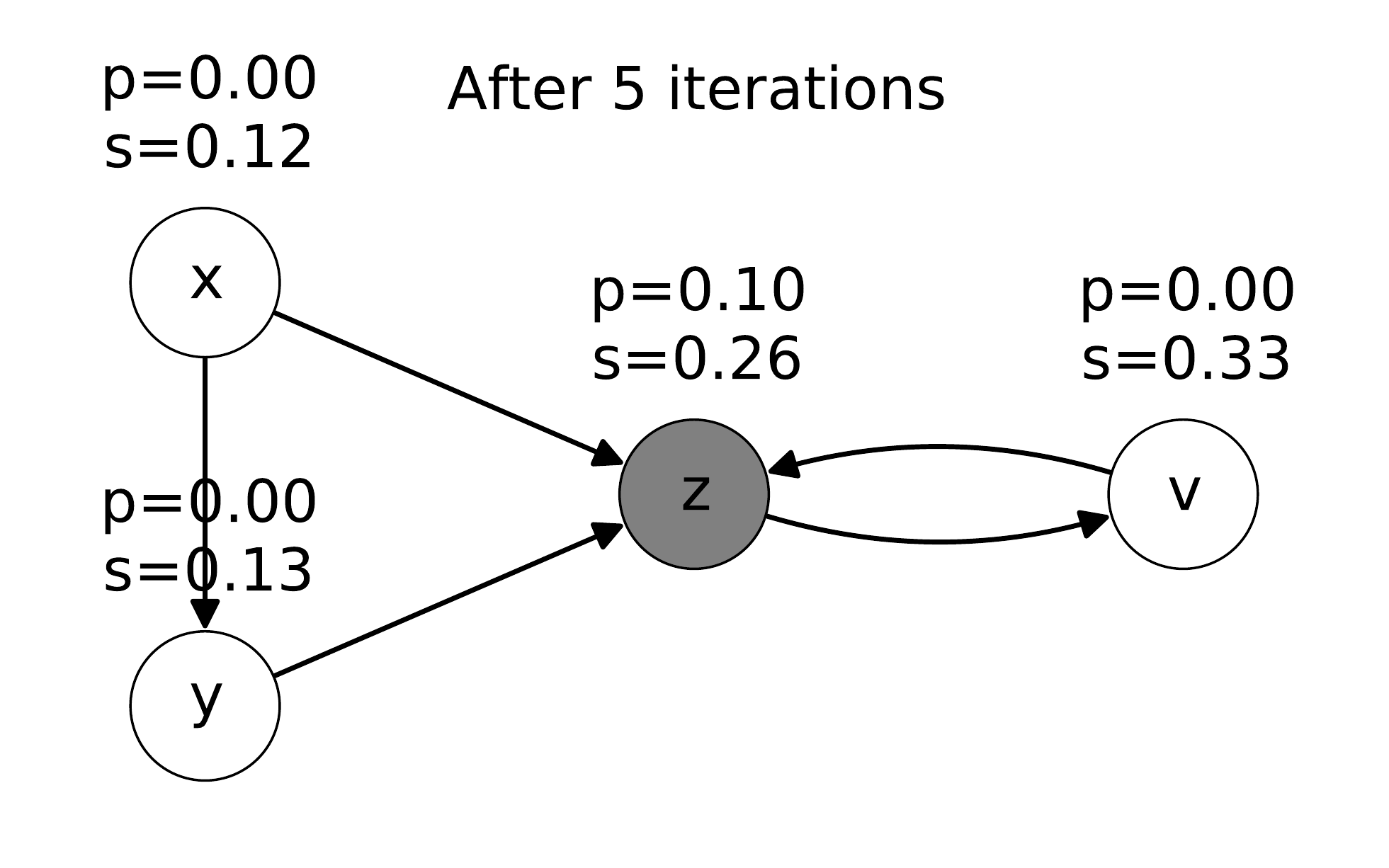}
  \caption{The first six iterations of the priority queue algorithm run on a simple graph with four nodes and target node v.  The node which will propagate its priority next is shown with a dark background.  The score $s$ for a node $u$ is our current estimate of $\pi(u,v)$, and the priority $p$ is the amount of unpropagated score.}
  \label{example_iterations}
\end{figure*}

\subsection{Error Analysis}
\label{sec:error_analysis}
One key question was the priority threshold at which we should stop popping nodes.  Initially we considered threshold $\epsilon$, but this threshold is not small enough to ensure that all errors are less than $\epsilon$.  We now show that $ \alpha \epsilon $ is a sufficient threshold, and our experiments show that it is tight.
\begin{thm}[Correctness]\label{thm-correctness} 
  When the priority queue algorithm is run until all priorities are less than $\alpha \epsilon$, the resulting score vector $s$ 
satisfies $\abs{s(u) - \pi(u,v)} < \epsilon$ for all $u \in V$.
\end{thm}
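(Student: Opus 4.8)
The plan is to maintain, as an exact identity throughout the run, a decomposition of the true scores $\pi(\cdot,v)$ into the current estimate vector $s$ plus a ``residual'' term driven by the priority vector $p$, and then to bound that residual term at termination by a contraction argument. Define the linear \emph{back-push} operator $B$ acting on vectors $r\colon V \to \R$ by
\[ (Br)(u) \;=\; (1-\alpha)\,\frac{1}{\outd{u}} \sum_{w \in \out(u)} r(w). \]
With $e_v$ denoting the indicator vector of the target, recurrence \eqref{eq_power_iteration} reads exactly $\pi(\cdot,v) = \alpha e_v + B\,\pi(\cdot,v)$, and iterating gives $\pi(\cdot,v) = \alpha\sum_{k\ge0} B^k e_v$ (so $(B^k e_w)(u)$ is the probability that an as-yet-unhalted walk from $u$ sits at $w$ after $k$ steps). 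The crucial property of $B$ is that every one of its rows sums to $1-\alpha$ — this is where the sink convention, ensuring $\outd{u}\ge1$, is used — so $B$ is $(1-\alpha)$ times a row-stochastic matrix and therefore $\|Br\|_\infty \le (1-\alpha)\,\|r\|_\infty$ for all $r$.

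Next I would establish the loop invariant: at the start of the algorithm, and after every pop-and-propagate iteration, the current vectors satisfy
\[ \pi(\cdot,v) \;=\; s \;+\; \sum_{k\ge1} B^k p. \]
The base case is immediate, since initially $s = p = \alpha e_v$, so the right-hand side is $\alpha e_v + \sum_{k\ge1}\alpha B^k e_v = \pi(\cdot,v)$. For the inductive step, when the loop pops $w$ with current residual $p(w)$, a short check of Algorithm~\ref{fig:algorithm} shows that $s$ is replaced by $s' = s + p(w)\,B e_w$, that $p$ is replaced by $p' = p - p(w)\,e_w + p(w)\,B e_w$, and that nothing else changes; substituting these into the right-hand side and using the telescoping identity $\sum_{k\ge1}B^k - \sum_{k\ge2}B^k = B$ gives $s' + \sum_{k\ge1}B^k p' = s + \sum_{k\ge1}B^k p$, so the invariant is preserved. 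Since $p$ stays entrywise nonnegative, this also yields $s(u) \le \pi(u,v)$ at every step, i.e.\ $s$ always underestimates $\pi$, as asserted in the text. The one genuinely delicate point — and, I expect, the main obstacle — is getting the invariant exactly right: the sum must start at $k=1$, not $k=0$, because the mass recorded in $p(w)$ has already been credited to $s(w)$ when it was created, so only its \emph{future} back-pushes $Bp, B^2p, \dots$ remain unaccounted for; the naive version with $\sum_{k\ge0}$ is off by precisely $p$ and is not preserved by the update.

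Finally I would combine the two pieces. When the algorithm halts, the while-condition fails, so $\max_w p(w) \le \alpha\epsilon$, i.e.\ $\|p\|_\infty \le \alpha\epsilon$. By the invariant, $\pi(u,v) - s(u) = \sum_{k\ge1}(B^k p)(u)$ for every $u$, and by the contraction bound $\|B^k p\|_\infty \le (1-\alpha)^k\|p\|_\infty$, whence
\[ \max_{u\in V}\abs{\pi(u,v)-s(u)} \;\le\; \sum_{k\ge1}(1-\alpha)^k\,\|p\|_\infty \;=\; \frac{1-\alpha}{\alpha}\,\|p\|_\infty \;\le\; (1-\alpha)\epsilon \;<\; \epsilon, \]
which is exactly the claimed guarantee. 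Once the invariant is pinned down, both its inductive verification and this final $L^\infty$ estimate are routine.
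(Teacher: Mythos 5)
Your proof is correct and rests on the same two ingredients as the paper's own argument: the identity $s = \alpha e_v + B(s-p)$ at termination (your invariant $\pi(\cdot,v) = s + \sum_{k\ge1}B^k p$ is algebraically equivalent to it) and the fact that the back-push operator contracts the $L^\infty$ norm by $1-\alpha$, yielding the identical bound $(1-\alpha)\epsilon$. The difference is only presentational — the paper asserts the termination-time identity in one sentence and solves the self-bounding inequality $E \le (1-\alpha)E + (1-\alpha)\alpha\epsilon$ at the maximum-error node, whereas you prove the invariant by induction over iterations and sum the resulting geometric series, which is a somewhat more rigorous rendering of the same argument.
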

\begin{proof}
  After the algorithm has run to completion, let $u$ be the node with the greatest additive error and let $E=\abs{\pi(u,v)-s(u)}$ be its error, so for all nodes $w$, $\abs{\pi(w,v) - s(w)} \leq E$.   Recall equation \eqref{eq_power_iteration}:
  \begin{align*}
    \pi(u,v) = (1-\alpha) \frac{1}{\bars{\out(u)}} \sum_{w \in \out(u)} \pi(w,v)
    + \begin{cases} \alpha, & u=v \\ 0, & u \neq v \end{cases}.     
  \end{align*}
When the algorithm has completed, a node $u$'s score is equal to the sum of the amount of score it has received from each out-neighbor.  An out-neighbor $w$ has final score $s(w)$ and final un-propagated score $p(w)$, so the amount propagated is $s(w)- p(w)$.  This gives us the following:
\[ s(u) = (1-\alpha) \frac{1}{\bars{\out(u)}} \sum_{w \in \out(u)} \pn{s(w) - p(w)}
+ \begin{cases} \alpha, & u=v \\ 0, & u \neq v \end{cases} 
\]
where $p(w) < \alpha \epsilon$ is the part of $w$'s score which has not been propagated back to $u$.  Subtracting these two equations, we see that
\begin{align*}
 E &= \abs{\pi(u,v) - s(u)} \\
   &=  (1-\alpha) \frac{1}{\bars{\out(u)}} \sum_{w \in \out(u)} \abs{\pi(w,v) - s(w)} \\
 & \quad + (1-\alpha) \frac{1}{\bars{\out(u)}} \sum_{w \in \out(u)} p(w) \\
  & \leq (1-\alpha) E + (1-\alpha)  \alpha \epsilon
\end{align*}
where we've used the fact that for all nodes $w$, $\abs{\pi(w,v) - s(w)} \leq E$ and $p(w) < \alpha \epsilon$.
Isolating the error $E$ we conclude that
\[ E \leq (1-\alpha) \epsilon < \epsilon. \]
\end{proof}

\subsection{Average Running Time}
\label{sec:average_runtime}
Next we analyze the running time of this algorithm.  In the worst case, our target node $v$ could have a high personalized PageRank from every other node, forcing us to consider the entire graph and do $\Omega(m)$ work.  Thus to give a useful bound on the running time, we give both an average case analysis and a worst-case parameterized analysis.  First we analyze the priority queue algorithm in the average case where the target node $v$ is chosen uniformly at random.
\begin{thm} \label{thm:average_case}
  Let an arbitrary graph $G$, additive error tolerance $\epsilon$, and teleport probability $\alpha$ be given. Let $n$ be the number of nodes in $G$ and $m$ be the number of edges.  If $v$ is chosen uniformly at random from $V$, then the priority queue algorithm runs in expected time $O \pn{\frac{1}{\alpha \epsilon} \pn{\frac{m}{n} + \log(n)}}$ steps. 
\end{thm}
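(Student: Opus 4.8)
The plan is to (i) charge the running time to individual pop operations, (ii) bound how many times each vertex is popped in terms of its true personalized PageRank to $v$, and (iii) average over the uniform choice of $v$. For step (i) I assume the priority queue supports extract-max in amortized $O(\log n)$ and the priority-increase operation in amortized $O(1)$ (e.g.\ a Fibonacci heap); this is what makes the $\log n$ and $m/n$ terms appear additively rather than multiplied. Then one iteration of the main loop that pops a vertex $w$ costs $O(\log n)$ for the extraction plus $O(1)$ per in-neighbor of $w$ (updating $s[u]$, $p[u]$, and its key), a total of $O(\log n + \ind{w})$. Writing $N(w)$ for the number of times $w$ is popped over the whole run, the running time is
\[
O\!\left(\sum_{w \in V} N(w)\bigl(\log n + \ind{w}\bigr)\right).
\]

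For step (ii): every time $w$ is popped the loop guard forces $p[w] > \alpha\epsilon$, and all of this is then pushed out along $w$'s in-edges and $p[w]$ is reset to $0$; hence the total score ever propagated out of $w$ is at least $\alpha\epsilon \cdot N(w)$. On the other hand, by the same bookkeeping used in the proof of Theorem~\ref{thm-correctness}, the total propagated out of $w$ by termination equals its final score minus its final residual, $s(w) - p(w) \le s(w)$, and since the estimates never exceed the truth (the update operator only adds nonnegative mass and is monotone, so $s(w) \le \pi(w,v)$ throughout) this is at most $\pi(w,v)$. Therefore $N(w) \le \pi(w,v)/(\alpha\epsilon)$, and the running time is
\[
O\!\left(\frac{1}{\alpha\epsilon}\sum_{w\in V}\pi(w,v)\bigl(\log n + \ind{w}\bigr)\right).
\]

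For step (iii), take the expectation over $v$ drawn uniformly from $V$ and exchange the order of summation:
\[
\mathbf{E}_v\!\left[\frac{1}{\alpha\epsilon}\sum_{w}\pi(w,v)\bigl(\log n + \ind{w}\bigr)\right]
= \frac{1}{\alpha\epsilon}\cdot\frac{1}{n}\sum_{w\in V}\bigl(\log n + \ind{w}\bigr)\sum_{v\in V}\pi(w,v).
\]
For each fixed $w$, $\pi(w,\cdot)$ is a probability distribution over where a walk from $w$ spends its time, so $\sum_{v}\pi(w,v) = 1$; and $\sum_{w}\ind{w} = m$. Hence the right-hand side equals $\frac{1}{\alpha\epsilon}\cdot\frac{1}{n}(n\log n + m) = O\!\left(\frac{1}{\alpha\epsilon}\bigl(\frac{m}{n} + \log n\bigr)\right)$, the claimed bound. (Termination is automatic since $N(w) \le 1/(\alpha\epsilon)$ for all $w$.)

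The crux is step (ii): the mass-conservation argument bounding the total outflow of a vertex by $\pi(w,v)$, which relies both on the ``estimate from below'' property and on correctly identifying that total outflow with $s(w) - p(w)$ at termination. The rest is routine --- the per-pop cost accounting (which does need an $O(1)$ priority-increase; with a plain binary heap one gets the slightly weaker $O(\frac{1}{\alpha\epsilon}(\frac{m}{n}+1)\log n)$) and the exchange of summation using $\sum_v \pi(w,v) = 1$. One minor loose end: the artificial sink vertex is harmless, since $\pi(\mathrm{sink}, v) = 0$ for every $v$ in the original vertex set and hence it is never popped.
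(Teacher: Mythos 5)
Your proposal is correct and follows essentially the same route as the paper's proof: cost each pop at $O(\log n + \ind{w})$, bound the number of pops of $w$ by $\pi(w,v)/(\alpha\epsilon)$ via the fact that each pop drains at least $\alpha\epsilon$ of priority while the total priority ever accumulated at $w$ is at most $\pi(w,v)$, and then average over $v$ using $\sum_v \pi(w,v)=1$ after exchanging the order of summation. Your step (ii) is slightly more explicit than the paper (which simply asserts the accumulated priority is at most $\pi(u,v)$), and your remarks about the Fibonacci heap and the sink node match the paper's surrounding discussion.
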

\begin{proof}
  Suppose we ran the algorithm once for every $v \in V$.  When the target node is $v$, the number of times a node $u$ can be popped from the queue is at most $\floor{\frac{\pi(u,v)}{\alpha \epsilon}}$, since its priority decreases by at least $\alpha \epsilon$ each time it is popped and the total accumulated priority is at most $\pi(u,v)$.  The time to propagate the score from a node $u$ is $O(\ind{u})$ steps since each of its in-neighbors must receive some of its score.  We also must do $O(\log(n))$ work to pop the maximum node from the priority queue.  Thus the running time for all $n$ nodes is at most
  \begin{align*}
    &\sum_{v \in V} \sum_{u \in V} \frac{\pi(u,v)}{\alpha \epsilon} O\pn{\bars{\inn{u}} + \log(n)} \\
      &\quad =\sum_u \sum_v \frac{\pi(u,v)}{\alpha \epsilon} O\pn{\bars{\inn{u}} + \log(n)} \\
      &\quad = \sum_u \frac{1}{\alpha \epsilon} O\pn{\bars{\inn{u}} + \log(n)}\\
      &\quad = O\pn{\frac{m + n \log(n)}{\alpha \epsilon}},
  \end{align*}
and the average running time per node is as claimed.
\end{proof}

In the appendix we prove a bound which is tighter for $\epsilon > \frac{1}{n}$ in the case when the personalized PageRanks from each source follow a power law.

Because we perform a large number of increase-priority operations on the priority queue, the best asymptotic time is achieved by using a Fibonacci heap for the priority queue.  With a Fibonacci heap \cite{fredman1987fibonacci}, we can increase a node's priority in constant amortized time, so in the above analysis the cost of $O(\ind{u})$ increase-priority operations is $O(\ind{u})$.  In our experiments we use a standard binary-heap priority queue for simplicity.  

Also note that our average running time analysis did not use the fact that we are using a priority queue.  The same time bound would hold if we simply maintained the set of nodes $u$ with $\pi(u) > \alpha \epsilon$ and repeatedly popped an arbitrary element of this set.  This is an alternative implementation of the algorithm which avoids the cost of the queue.  By removing the cost of the priority queue from the above analysis we see that this alternative runs in time
\[O \pn{\frac{1}{\alpha \epsilon} \frac{m}{n}}. \]
Our parameterized running time analysis does use the priority queue property to improve the dependence on $\epsilon$ from $\frac{1}{\epsilon}$ to $\log \pn{\frac{1}{\epsilon}}$.

{\bf Comparison with Monte Carlo} In \cite{benczur2005spamrank}, the authors suggest computing values of $\pi(u,v)$ to a given target $v$ by taking Monte Carlo walks from every other node.  As shown in \cite{fogaras2005towards} using the Chernoff bounds, computing an $\epsilon$ approximation of $\pi(u,v)$ for a single source $u$ and with failure probability $\delta$ takes
\[\Theta \pn{ \frac{1}{\epsilon^2} \log\pn{\frac{1}{\delta}}} \]
steps.  
If the graph is sparse enough or $\epsilon$ is small enough, our average time bound of 
\[  O\pn{\frac{1}{\alpha \epsilon} \frac{m}{n}} \] 
is better than this.  Thus to compute an $\epsilon$ approximation of personalized PageRank for all pairs of nodes, running the priority queue algorithm to every node is a viable alternative to running Monte Carlo from every node.  

\subsection{Parameterized Running Time}
\label{sec:parameterized_runtime}
Next we give a paramterized bound that applies to arbitrary graphs and arbitrary target node $v$.  As in the preliminaries section, we define
\[D_v(x) = \sum_{u : \pi(u,v) > x} \pn{\ind{u} +  \log(n)}\]
to capture the difficulty of computing personalized PageRank to the target $v$.
\begin{thm} \label{parameterized_case}
If the priority queue algorithm is run with teleportation probability $\alpha$, target node $v$, and additive error $\epsilon$, it takes time
\[ O \pn{\frac{D_v(\alpha \epsilon)}{\alpha} \log \pn{\frac{1}{\epsilon \alpha}}} \]
\end{thm}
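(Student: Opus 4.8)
The plan is to bound the running time by the number of priority‑queue pops, weighted by the cost of a pop. Using a Fibonacci heap (as discussed after Theorem~\ref{thm:average_case}), a single pop of a node $w$ takes $O\pn{\ind{w} + \log n}$ amortized time: $O(\log n)$ to extract the maximum, and $O(1)$ per in‑neighbor for the increase‑priority operations. So if $k_w$ denotes the number of times $w$ is popped, the running time is $O\pn{\sum_w k_w\pn{\ind{w}+\log n}}$. Two observations trim this sum. First, a priority $p(w)$ is always a partial sum of the increments added to $s(w)$, so $p(w)\le s(w)\le\pi(w,v)\le 1$; in particular every priority is at most $1$. Second, a node is popped only while its priority exceeds $\alpha\epsilon$, so only nodes with $\pi(w,v)>\alpha\epsilon$ are ever popped, and the sum ranges over exactly the nodes counted by $D_v(\alpha\epsilon)$. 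It therefore suffices to prove that every node is popped $O\pn{\frac1\alpha\log\frac1{\alpha\epsilon}}$ times; the theorem then follows by summing.

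To prove this per‑node bound I would argue that the residuals at which a fixed node $u$ is successively popped decay geometrically, at a rate governed by $\alpha$. Residual only reaches $u$ by being propagated backward from $v$, and every propagation hop multiplies it by $(1-\alpha)/\outd{x}\le 1-\alpha$, where $x$ is the node doing the push; hence residual arriving at $u$ along a length‑$\ell$ backward path carries a factor $(1-\alpha)^\ell$, and any residual that leaves $u$ and loops back to $u$ must traverse a directed cycle and is attenuated by a factor strictly below $1$ on each return. The priority‑queue discipline is what converts this into a per‑pop statement: since $u$ is popped only when $p(u)$ is the current global maximum, the algorithm must clear the larger (earlier, shorter‑path) residual contributions before the smaller (later, longer‑path) ones, so the $i$‑th pop of $u$ sees a residual geometrically smaller than the first. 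Since the first pop has priority at most $1$ and the last has priority more than $\alpha\epsilon$, this forces $k_u = O\!\pn{\frac{\log(1/(\alpha\epsilon))}{\log(1/(1-\alpha))}} = O\pn{\frac1\alpha\log\frac1{\alpha\epsilon}}$, using $\log\frac1{1-\alpha}\ge\alpha$.

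The step I expect to be the main obstacle is exactly this last point: making precise the claim that the priority‑queue ordering forces the pop‑residuals of $u$ to decay geometrically. The maximum priority in the queue is not monotone — a push into a high‑in‑degree node can temporarily raise it — so one has to bound how much residual can re‑accumulate at $u$ while larger priorities elsewhere are being drained, and show that only a bounded number of pops of $u$ can occur at each geometric scale despite this non‑monotonicity. Everything else is routine bookkeeping: with $k_w = O\pn{\frac1\alpha\log\frac1{\alpha\epsilon}}$ for each popped node $w$,
\begin{align*}
  &O\pn{\sum_{w:\pi(w,v)>\alpha\epsilon} k_w\pn{\ind{w}+\log n}} \\
  &\qquad = O\pn{\frac1\alpha\log\frac1{\alpha\epsilon}\sum_{w:\pi(w,v)>\alpha\epsilon}\pn{\ind{w}+\log n}}
  = O\pn{\frac{D_v(\alpha\epsilon)}{\alpha}\log\frac1{\epsilon\alpha}},
\end{align*}
which is the claimed bound.
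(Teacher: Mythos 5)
Your outer bookkeeping is right and matches the paper's: only nodes with $\pi(u,v) > \alpha\epsilon$ are ever popped (since priority $\le$ score $\le \pi(u,v)$), each pop of $u$ costs $O\pn{\ind{u} + \log n}$, so everything reduces to showing each node is popped $O\pn{\frac{1}{\alpha}\log\frac{1}{\alpha\epsilon}}$ times. But that per-node pop bound is exactly the content of the theorem, and you have not proved it. Your proposed mechanism --- that successive pop-priorities of a fixed node $u$ decay geometrically because residual returning to $u$ must traverse a cycle and pick up $(1-\alpha)$ factors, with the max-priority discipline somehow enforcing that the $i$-th pop is geometrically smaller than the first --- is not established, and you candidly flag it as the main obstacle. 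It is indeed the crux: between two pops of $u$, residual can arrive at $u$ from many out-neighbors, each popped several times, and the maximum priority in the queue is not monotone, so there is no clean single-step contraction of the form $p_{i+1} \le (1-\alpha)p_i$. As written, the proposal identifies the right target but leaves the only nontrivial step open.

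The paper closes this gap by a different and much cleaner device: a stage decomposition combined with the correctness theorem used as a potential function. Divide the run into stages $i = 1, \ldots, \log_2\frac{1}{\alpha\epsilon}$, where stage $i$ ends once all priorities drop below $2^{-i}$. At that moment Theorem~\ref{thm-correctness} (applied with threshold $2^{-i} = \alpha\epsilon'$, i.e.\ $\epsilon' = 2^{-i}/\alpha$) guarantees $\pi(u,v) - s(u) \le 2^{-i}/\alpha$; since scores only increase toward $\pi(u,v)$, this quantity bounds all future priority that $u$ can ever accumulate. Every pop of $u$ in stage $i+1$ occurs at priority above $2^{-(i+1)}$ and resets $p(u)$ to zero, so it consumes at least $2^{-(i+1)}$ of that remaining budget; hence $u$ is popped at most $\pn{2^{-i}/\alpha}/2^{-(i+1)} = 2/\alpha$ times per stage, for $O\pn{\frac{1}{\alpha}\log\frac{1}{\alpha\epsilon}}$ pops overall. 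No claim about the order or decay of individual pop-priorities is needed. If you want to salvage your write-up, replace the geometric-decay argument with this stage argument; the rest of your accounting then goes through verbatim.
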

\begin{proof}
  We divide the execution of the algorithm into $\log_2 \pn{\frac{1}{\alpha \epsilon}}$ stages, where in stage $i$ it pops nodes with priority greater $\frac{1}{2^i}$ until all nodes have priority less than $\frac{1}{2^i}$.  After stage $i$ has completed, by Theorem \ref{thm-correctness}, the difference between the score of a node $u$ and the true value of $\pi(u,v)$ is at most $\frac{1}{2^i \alpha}$.  This implies that each node $u$ can be popped at most $\frac{2}{\alpha}$ times in each stage, since each pop in stage $i+1$ decreases the difference between $s(u)$ and $\pi(u,v)$ by at least $\frac{1}{2^{i+1}}$.  Each time a node $u$ is popped we do $O(\ind{u})$ work increasing priorities and $O(\log(n))$ work popping the node from the priority queue.  This gives us a running time of
\begin{align*}
  &O \pn{\frac{2}{\alpha} D_v \pn{\frac{1}{2}} + \frac{2}{\alpha} D_v \pn{\frac{1}{2^2}}  + \cdots + \frac{2}{\alpha} D_v \pn{\frac{1}{2^{\log_2 \pn{\frac{1}{\alpha \epsilon}}}}}} \\
&=  O \pn{\frac{1}{\alpha} D_v(\alpha \epsilon) \log \pn{\frac{1}{\epsilon \alpha}}}.
\end{align*}
\end{proof}


\subsection{Extension to Weighted Graphs} \label{sec:weighted_graph}
We assume that the graph is unweighted for simplicity, but our algorithm extends immediately to the case of a weighted graph, in which $\text{weight}[u][w]$ is proportional to the probability of transitioning from node $i$ to node $j$ on a random walk.  In this case, the change in score in Algorithm \ref{fig:algorithm} should become
\[\Delta s =  (1-\alpha) \frac{p[w] \text{weight}[u][w]}{ u.\text{weightedOutDegree}}\]
where $u.\text{weightedOutDegree}$ is  defined as $\sum_w \text{weight}[u][w]$.  Similarly, the power iteration equation \eqref{eq_power_iteration} should become
\begin{align*}
 \pi(u,v) &= (1-\alpha) \frac{1}{d_{\text{OUT}}(u)} \sum_{w \in \out(u)} \text{weight}[u][v] \cdot \pi(w,v)
\\& \quad + \begin{cases} \alpha, & u=v \\ 0, & u \neq v \end{cases}.     
  \end{align*}
where $d_{\text{OUT}} = \sum_{w \in \out(u)} \text{weight}[u][v]$ is the weighted out-degree of $u$. 
All the proofs can be modified similarly.  The theorem statements remain the same.

\section{Experiments}
\label{sec:experiments}
For our experiments, we used a part of the Twitter follower graph with 
5.3 million nodes and 389 million edges.  We ran an experiment for each setting of parameters in the Cartesian product of teleport probability $\alpha \in \{0.1, 0.2\}$ and additive error $\epsilon \in \{10^{-4}, 10^{-5}, 10^{-6}\}$.  For each experiment we chose 100  target nodes $v$ uniformly at random and ran the priority queue algorithm.  Since nodes with high global PageRank might be targets more often than other nodes, we repeated the above setup sampling 100 target nodes with probability equal to their global PageRank. We measured the number of steps (defined as the number of times we updated some node's priority in the inner loop), the change in wall-clock time, and the maximum error.  We measured the maximum error by running power iteration, equation \eqref{eq_power_iteration}, until convergence and comparing the result pointwise with the result of the priority queue algorithm.  

We first note that our error analysis is tight.  For efficiency, we want to do as few operations as possible to reach our desired error tolerance $\epsilon$.  If our empirical error was much lower than our target error $\epsilon$, it would indicate that we were wasting effort achieving an accuracy which is finer than required.

However, on the Twitter graph there are nodes with empirical error $0.85\epsilon$, which is quite close to our proven bound of $\epsilon$.  A histogram of the empirical errors for $\epsilon=10^{-6}$ and targets sampled from the global PageRank distribution is shown in Figure \ref{fig:errors}.  Notice that for these parameters the empirical error is often more than 50\% of the proven bound $\epsilon$.
\begin{figure}
  \centering
  \includegraphics[width=0.5\textwidth]{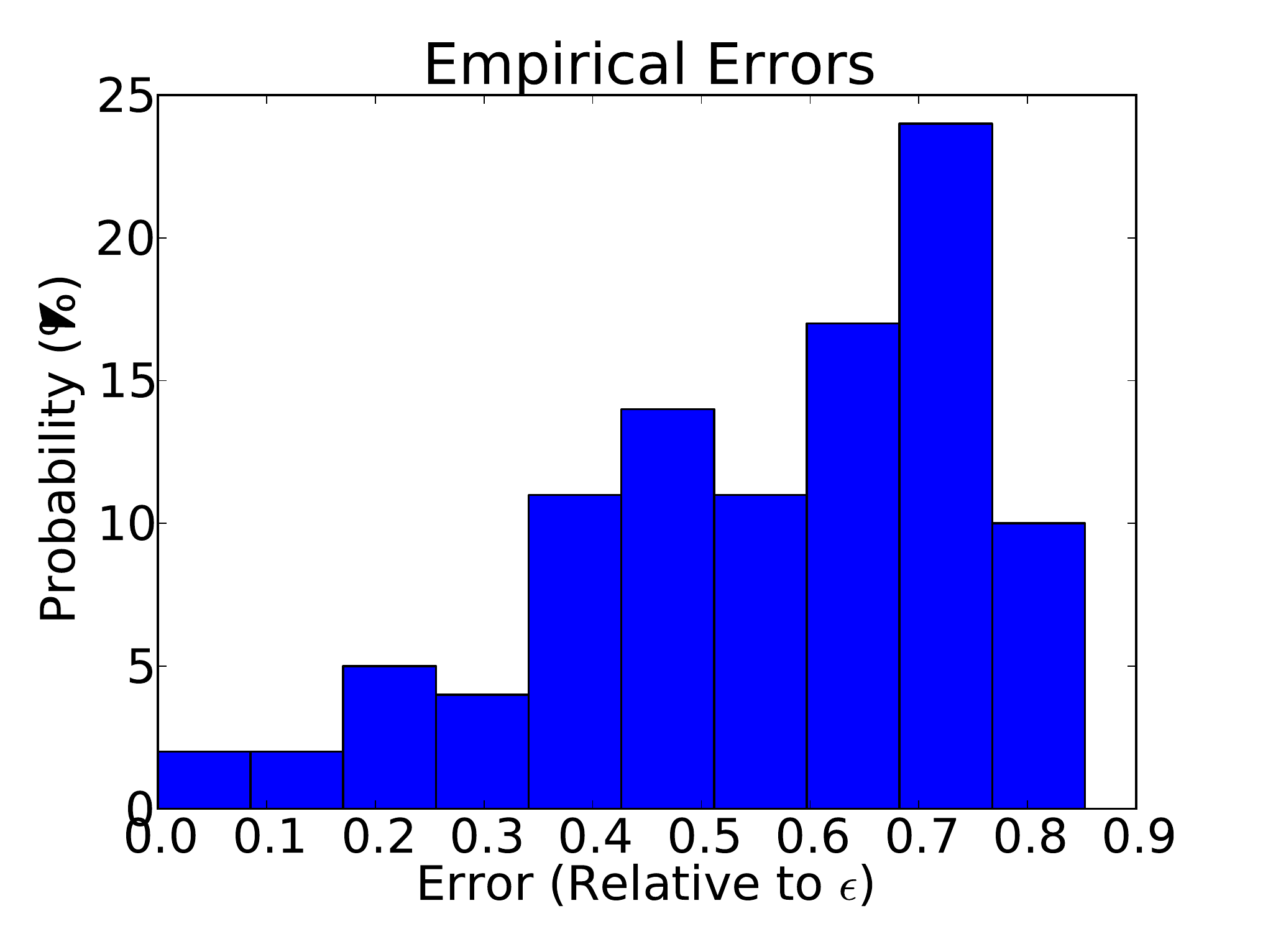}
  \caption{The empirical error of our algorithm after convergence.  To obtain this data, we set teleport probability $\alpha=.1$ and error threshold $\epsilon=10^{-6}$, choose 100 target nodes $v$ with probability equal to their global PageRank, ran the priority queue algorithm to obtain scores $s$, and then computed the empirical error $\max_{u \in V} \abs{ \pi(u,v) - s(u)}$. The x-axis is empirical error divided by $\epsilon$.  Notice that most nodes have an error which is a large fraction of $\epsilon$, showing that our error bound is tight in practice.}

  \label{fig:errors}
\end{figure}

We compare our actual average running time to the bound from Theorem \ref{thm:average_case}, $\frac{1}{\alpha \epsilon} \frac{m}{n}$ steps, and find that the algorithm actually runs faster than the bound requires.  For $\alpha=0.2$ and all three values of $\epsilon$, the algorithm uses less than 3\% 
of the number of steps the bound allows.  For $\alpha=0.1$, and all values of $\epsilon$ the algorithm uses less than 20\% 
of the number of steps the bound allows.  A histogram of the running times is shown in figure \ref{fig:average_steps}.  
Notice that the step-axis is log-scale, and most nodes use far fewer steps than the bound represented by the vertical line allows.
\begin{figure}
  \centering
  \includegraphics[width=0.5\textwidth]{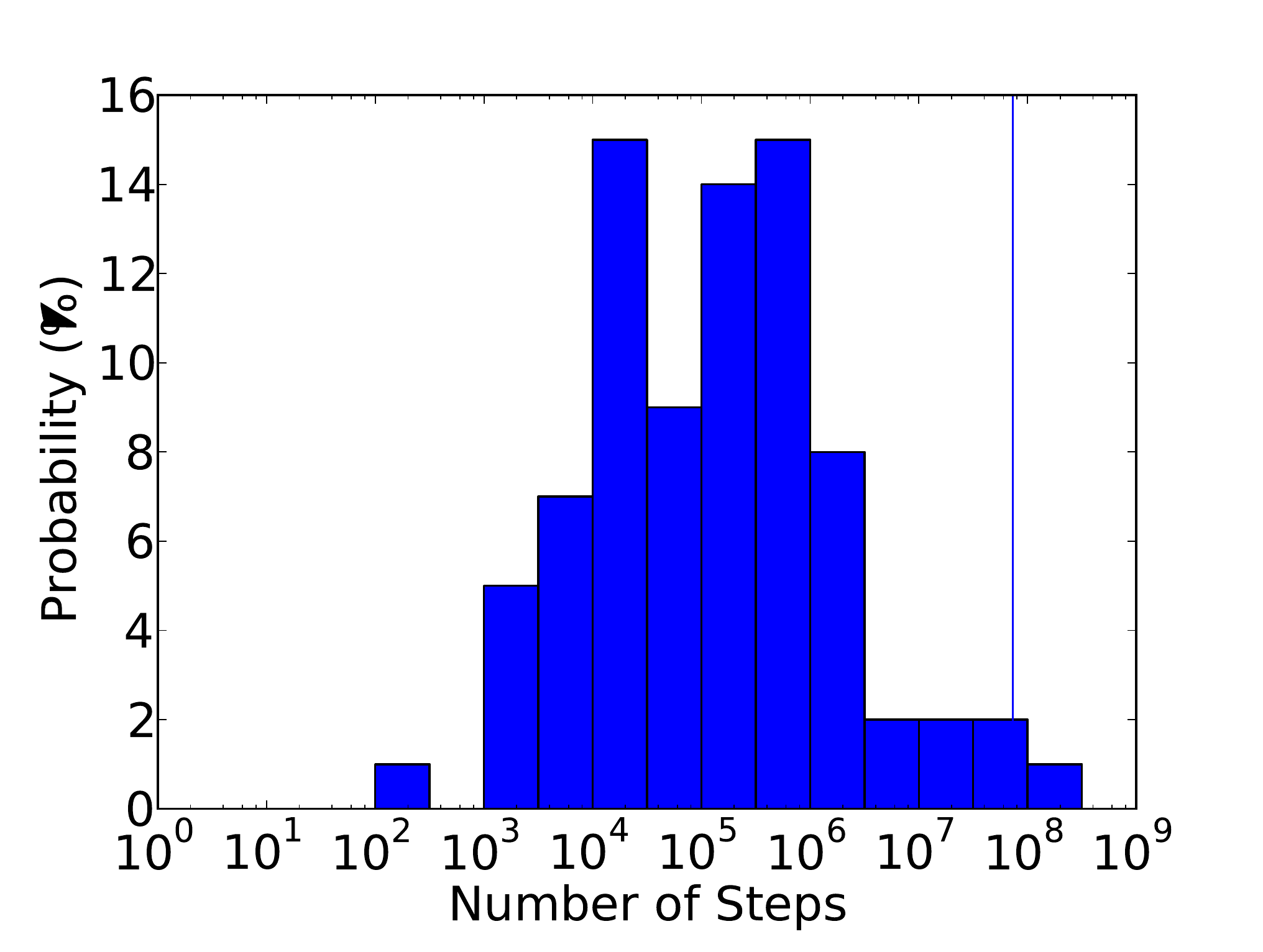}
  \caption{  The number of steps required to reach convergence, defined as the number of times we updated some node's priority in the inner loop.  To obtain this data, we set teleport probability $\alpha=.1$ and error threshold $\epsilon=10^{-5}$, choose 100 target nodes $v$ uniformly at random, and ran the priority queue algorithm. The vertical line indicates the average running time bound of Theorem \ref{thm:average_case}. Notice that the x-axis is log-scale, so most nodes require many fewer steps than the bound allows.
    } 
  \label{fig:average_steps}
\end{figure}

Our parameterized analysis shows that the number of steps needed is at most $O \pn{\frac{D_v(\alpha \epsilon)}{\alpha} \log \pn{\frac{1}{\epsilon \alpha}}}$.  To measure how tight this is, we compared the number of steps taken to $D_v$.  To use more adversarial $v$, we sampled $v$ from the global PageRank distribution instead of uniformly at random, so $v$ with high global PageRank will be chosen more often.  We found that in practice $D_v(\alpha \epsilon)$ is an excellent predictor for the number of steps, and that the constant of proportionality in practice is much less than $\frac{1}{\alpha} \log \pn{\frac{1}{\epsilon \alpha}}$.  For example, with $\alpha=0.1$ and $\epsilon=10^{-5}$, the proven ratio between step count and $D_v(\alpha \epsilon)$ is $200$, but in our experiment the average ratio is less than 4.  The distribution of ratios is shown in Figure \ref{fig:D_v}.    Note that for most nodes, the number of steps taken is within a factor of 2 of $D_v(\epsilon \alpha)$ even though the absolute number of steps varies on an exponential scale, as shown in Figure \ref{fig:average_steps}.
\begin{figure}
  \centering
  \includegraphics[width=0.5\textwidth]{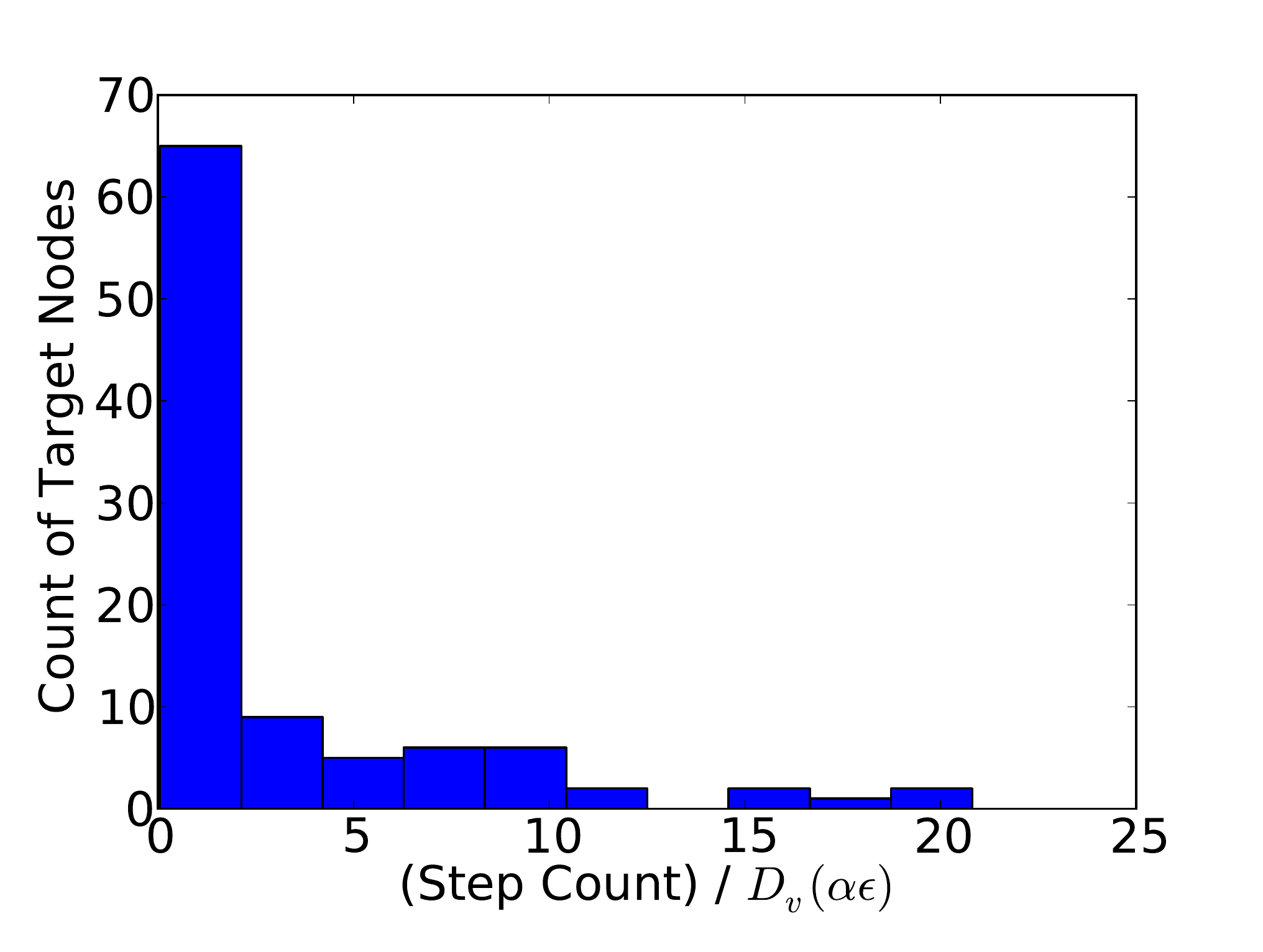}
  \caption{The number of steps required to reach convergence compared to the parameter $D_v(\alpha \epsilon)$.  To obtain this data, we set teleport probability $\alpha=.1$ and error threshold $\epsilon=10^{-5}$, choose 100 target nodes $v$ sampled from the PageRank distribution, and ran the priority queue algorithm.  Notice that for most nodes, the number of steps taken is within a fraction of 2 of $D_v(\epsilon \alpha)$ even though the absolute number of steps varies on an exponential scale, as shown in Figure \ref{fig:average_steps}. } 
  \label{fig:D_v}
\end{figure}

Now we compare our algorithm's performance to power iteration and observe the benefit of only visiting a set of nodes around $v$.  It can be shown that equation \eqref{eq_power_iteration} is a contraction mapping with contraction ratio $1- \alpha$.  Thus one alternative to our priority queue algorithm is to apply equation \eqref{eq_power_iteration} repeatedly. Using the contraction map property, to guarantee additive error $\epsilon$ we must do $\log_{1-\alpha}{\frac{1}{\epsilon}}$ iterations.  In our experiment, we also computed personalized PageRank to each target using power iteration in order to measure the empirical error.  We found that each iteration of applying equation \eqref{eq_power_iteration} took 3.9 seconds, and this was stable over a large number of iterations (the graph is too large for the processor cache to help).  Because our algorithm explores only a neighborhood around $v$, it is often much more efficient than power iteration. For example, when $\alpha=0.1$ and $\epsilon=10^{-4}$, our algorithm took 0.2 seconds on average, which is 1700 times faster than the 87 iterations needed to guarantee at most $\epsilon$ error. For smaller $\epsilon$, our algorithm is forced to consider more of the graph, so its relative advantage diminishes.  For the smallest value of $\epsilon$ we tried, $\epsilon=10^{-6}$, our algorithm took 30 seconds on average, while power iteration takes $3.8 \log_{1-\alpha}{\frac{1}{\epsilon}} \approx 500$ seconds.  A table of running times for the two algorithms is shown in Figure \ref{fig:runtime_table}.
  \begin{figure}[tb]
    \centering
    \begin{tabular}{c|c|c}
      $\epsilon$ & Priority Queue Algorithm (s) & Power Iteration (s)\\
     \hline
     $10^{-4}$ & 0.20 & 330\\
     \hline
     $10^{-5}$ &  1.2 & 410\\
     \hline
     $10^{-6}$ & 29 & 500
    \end{tabular}
    \caption{The average wall-clock running time of our algorithm compared to power iteration.  To obtain the first column, we set teleport probability $\alpha=.1$, choose 100 target nodes $v$ uniformly at random, and ran the priority queue algorithm until completion.  To obtain the second column, we measured the time for an average iteration of applying equation \eqref{eq_power_iteration} and multiplied by $\log_{1-\alpha} \pn{\frac{1}{\epsilon}}$, the number of iterations needed for accuracy $\epsilon$.  Notice that by propagating the largest changes first, our algorithm is much faster than power iteration.}
    \label{fig:runtime_table}
  \end{figure}
As $\epsilon$ tends to zero, our algorithm's performance degrades gracefully to within a constant factor of the performance of power iteration, as proven in Theorem \ref{parameterized_case}.

\section{Acknowledgements}
This work was supported in part by the DARPA xdata program, by grant \#FA9550-
12-1-0411 from the U.S. Air Force Office of Scientific Research
(AFOSR) and the Defense Advanced Research Projects Agency (DARPA), and
by NSF Award 0915040.  One of the authors was supported by the National Defense Science \& Engineering Graduate Fellowship (NDSEG) Program.  We would like to thank Rishi Gupta for helpful conversations.

\balance

\bibliographystyle{abbrv}
\bibliography{susceptibility}
\appendix

\section{Average Running Time for Power Law Graphs} 
We can get a better bound on the expected running time if we assume a power law on the personalized PageRank values: suppose that for each $u$ there is some $\beta \in (0,1)$ such that if we order the nodes $v_1, \ldots, v_n$ in decreasing order of $\pi(u,v_i)$ then
\[ \pi(u,v_i) = \eta i^{-\beta} \]
for some constant $\eta$.  Since $\sum_v \pi(u,v)=1$, the value for $\eta$ is determined by $\beta$:
 \[\eta \approx \frac{1-\beta}{n^{1-\beta}}.\]  Such a power law was observed empirically on the twitter graph in \cite{bahmani2010fast} with $\beta \approx 0.75$.  For simplicity we assume that all nodes have the same exponent $\beta$.
\begin{thm}
\label{powerlaw_suscept_alg}
  For a graph $G$ in which the personalized PageRanks from each node follow a power law with exponent $\beta$, if $v$ is chosen uniformly at random from $V$, then the priority queue algorithm runs in time 
\[ O\pn{  \frac{m}{n^{\frac{1}{\beta}}} \pn{\frac{1}{\alpha \epsilon}}^{\frac{1}{\beta}}}\]
 where $m$ is the number of edges in the graph.  
\end{thm}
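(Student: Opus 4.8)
The plan is to reuse the amortized counting argument from the proof of Theorem~\ref{thm:average_case}, but to replace the crude step ``$\sum_v \pi(u,v) = 1$'' with a sharper estimate that exploits the power law decay. As in that proof, I would bound the total cost of running the algorithm once for each of the $n$ possible targets, and then divide by $n$ to obtain the average.

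First I would fix a source $u$ and count the total number of times $u$ is popped from the priority queue over all $n$ runs. When the target is $v$, node $u$ is popped at most $\floor{\pi(u,v)/(\alpha\epsilon)}$ times, since its accumulated priority never exceeds $\pi(u,v)$ and each pop removes at least $\alpha\epsilon$; in particular $u$ is never popped when $\pi(u,v) < \alpha\epsilon$. Writing the targets in decreasing order of $\pi(u,v_i) = \eta i^{-\beta}$, the condition $\pi(u,v_i) \ge \alpha\epsilon$ becomes $i \le i^\ast := \pn{\eta/(\alpha\epsilon)}^{1/\beta}$, so the total pop count of $u$ is at most $\sum_{i=1}^{i^\ast} \eta i^{-\beta}/(\alpha\epsilon)$.

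The one genuine computation is to bound this truncated power-law sum by an integral: since $\beta \in (0,1)$, $\sum_{i=1}^{i^\ast} i^{-\beta} \le (i^\ast)^{1-\beta}/(1-\beta)$, so the total pop count of $u$ is $O\!\pn{\tfrac{\eta}{\alpha\epsilon}(i^\ast)^{1-\beta}}$. Substituting $i^\ast = \pn{\eta/(\alpha\epsilon)}^{1/\beta}$ and using $1 + \tfrac{1-\beta}{\beta} = \tfrac{1}{\beta}$ collapses this to $O\!\pn{\pn{\eta/(\alpha\epsilon)}^{1/\beta}}$; then plugging in $\eta \approx \tfrac{1-\beta}{n^{1-\beta}}$ and simplifying the exponent of $n$ via $\tfrac{1-\beta}{\beta} = \tfrac1\beta - 1$ gives a total pop count of $O\!\pn{\tfrac{n}{n^{1/\beta}}\pn{\tfrac{1}{\alpha\epsilon}}^{1/\beta}}$ for each $u$, with a constant depending only on $\beta$. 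Each pop of $u$ costs $O(\ind{u})$ to update $u$'s in-neighbors, plus $O(\log n)$ for the heap operation, which I would absorb into the $O(m)$ term under the standing assumption $m \ge n\log n$ (or drop entirely by using the queue-free variant of the algorithm from Section~\ref{sec:average_runtime}). Summing over all $u$ bounds the total work over all $n$ runs by $O\!\pn{m\cdot\tfrac{n}{n^{1/\beta}}\pn{\tfrac{1}{\alpha\epsilon}}^{1/\beta}}$, and dividing by $n$ yields the claimed average running time.

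I expect the main obstacle to be bookkeeping rather than any real difficulty: the power law, the normalization $\sum_v\pi(u,v)=1$, and the formula for $\eta$ all hold only approximately, so every estimate must be carried out up to constant factors with $\beta$ treated as a fixed constant in $(0,1)$, and one should confirm that the floor in $\floor{\pi(u,v)/(\alpha\epsilon)}$ and the truncation at $i^\ast$ do not interact badly --- e.g., that the interesting regime is $i^\ast \ge 1$ (equivalently $\alpha\epsilon \le n^{-(1-\beta)}$), outside of which no source is ever popped and the claimed bound holds trivially.
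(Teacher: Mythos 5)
Your proposal is correct and follows essentially the same route as the paper's proof: bound the total work over all $n$ targets by $\sum_{u,v}\floor{\pi(u,v)/(\alpha\epsilon)}\,\ind{u}$, truncate the power-law sum at $i^\ast=(\eta/(\alpha\epsilon))^{1/\beta}$, bound it by the integral $\int_1^{i^\ast}x^{-\beta}\,dx$, and substitute $\eta$ to collapse the exponents. The only (harmless) difference is that you explicitly account for the $O(\log n)$ heap cost, which the paper's proof silently drops by counting only $\ind{u}$ per pop.
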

\begin{proof}
  Suppose we ran the algorithm once for every $v \in V$.  As in the average-case analysis proof, the running time is at most $\sum_{u,v} \floor{\frac{\pi(u,v)}{\alpha \epsilon} } \bars{\inn{u}} $.  With the power law assumption, the majority of nodes will not be popped even once because $\pi(u,v) < \alpha \epsilon$.  The largest $i$ such that $\pi(u,v_i) = \eta i^{-\beta} \geq \alpha \epsilon$ is 
\[ i_* = \pn{\frac{\eta}{ \alpha \epsilon}} ^{\frac{1}{\beta}}\]
so we only need to visit this many distinct nodes $v$ for each $u$.
The running time for all $n$ nodes is thus at most
  \begin{align*}
    \sum_v \sum_u \floor{\frac{\pi(u,v)}{\alpha \epsilon} } \bars{\inn{u}}
      & \leq \sum_u \sum_{i=1}^{i_*} \frac{\eta i^{-\beta}}{\alpha \epsilon} \bars{\inn{u}} \\
      & \approx \frac{\eta}{\alpha \epsilon} \sum_u \bars{\inn{u}} \int_1^{i_*}x^{-\beta} dx\\
      &\leq m \frac{\eta}{\alpha \epsilon} \frac{i_*^{-\beta+1}}{-\beta + 1}.
  \end{align*}
Substituting in the value of $i_*$ and $\eta$ we see that the total runtime for all $n$ target nodes is
\[ c n   \frac{m}{n^{\frac{1}{\beta}}} \pn{\frac{1}{\alpha \epsilon}}^{\frac{1}{\beta}}\]
where $c = (1-\beta)^{\pn{\frac{1}{\beta} - 1}}$ so the average running time per node is as claimed.
\end{proof}

\end{document}